\newcommand{\supp}{\operatorname{supp}}
\newtheorem{theorem}{Theorem}
\newtheorem{proposition}{Proposition}
\theoremstyle{definition}
\newtheorem{remark}{Remark}
\theoremstyle{definition}
\title{The Lorentz Gas in a Mean-Field Potential: Weak Coupling and Diffusive Regime}
\author{Dominik Nowak}
\date{\today}
\begin{document}

\maketitle

\begin{abstract}
    We investigate the diffusive scaling of the Lorentz gas in the presence of an external force of mean-field type. In the weak coupling regime and for diffusive time scales, the test particle's law converges to the probability density satisfying the heat equation. The diffusion coefficient of the heat equation is given by the Green-Kubo relation.
\end{abstract}

\section{Introduction}
Let us consider the motion of a test particle of unit mass through a random configuration of spatially fixed obstacles in $d \geq 2$ dimensions. 
For convenience, we restrict ourselves to spherical obstacles with radius one, which are distributed according to a Poisson law with intensity $\mu > 0$. 
This particular toy model goes by the name of \textit{Lorentz gas} \cite{lorentz_motion_1905}.

Depending on the interaction of the test particle with the obstacles, we expect the Lorentz gas to model different physical systems.
In fact, from the literature it is well-known that under suitable rescaling of space $x$, time $t$, interactions $\phi$ and density of obstacles, we can derive different effective equations starting from the particle system. 

An intuition for a certain kinetic behaviour can be deduced from the test particle's mean free path. In the case of the low-density limit, the mean free path has a macroscopic order of magnitude, so that the system models a rarefied gas. Lorentz himself \cite{lorentz_motion_1905} conjectured the linear Boltzmann equation to describe the emerging dynamics. In 1969, Giovanni Gallavotti \cite{gallavotti_divergences_1969,gallavotti_rigorous_nodate} provided the first rigorous proof for Lorentz's claim, deriving a constructive approach to show that the Lorentz gas for hard spheres is described by the linear Boltzmann equation in the low-density limit. Some years later Spohn \cite{spohn_lorentz_1978} improved Gallavotti's result in showing the convergence of path measures of the mechanical system to a random flight process. Further results followed in the paper by Boldrighini et al. \cite{boldrighini_boltzmann_1983}, where the authors showed that the limiting linear Boltzmann equation holds for typical configurations of obstacles. Later, in 1999, Desvillettes and Pulvirenti derived the linear Boltzmann equation for long range forces by modelling the interactions of the obstacles by a truncated power law potential. The low-density limit was taken simultaneously as the truncation was sent to infinity. The technical challenge was in proving the Markovianity of the limiting process, so that the authors had to obtain explicit error bounds \cite{desvillettes_linear_1999}. Further progress on the derivation of the linear Boltzmann equation in the presence of long range interactions from particle systems was made by Ayi in \cite{ayi_newtons_2017}. In this particular paper, Ayi considered the motion of a tagged particle in a gas close to global equilibrium, where the particles interact through a long range potential. The novelty in \cite{ayi_newtons_2017} is that no truncation was imposed, and therefore collisions with the potential's infinite tail had to be dealt with.

The same idea of Gallavotti's proof allowed Desvillettes and Ricci \cite{desvillettes_rigorous_2001} to derive the linear Landau equation from the Lorentz gas in the weak coupling regime. Compared to the low-density limit, the different scaling gives rise to \textit{many} but \textit{soft} collisions with the obstacles and can be thought as a model describing dense plasmas. For \textit{higher} densities, the so-called weak coupling limit is reached. The same physical intuition suggests a diffusion of the velocity process on the kinetic energy sphere, which arises from the mentioned \textit{multiple soft} collisions. Indeed in the weak coupling regime, Kesten and Papanicolau \cite{kesten_limit_1980} showed that the velocity process converges to a Brownian motion on a sphere of constant velocity in three spatial dimensions and the diffusion coefficient depends on the obstacle's interaction potential. In \cite{durr_asymptotic_1987} the authors were able to obtain the result from \cite{kesten_limit_1980} in $\mathbb{R}^2$. Finally, Kirkpatrick derived the linear Landau equation from the mechaical system in the weak coupling limit \cite{kirkpatrick_rigorous_2009}. 

As the linear Boltzmann and Landau equation is only valid for kinetic time scales, we expect other governing equations describing the system’s behaviour in a \textit{suitable hydrodynamic limit}. In principle, a hydrodynamic equation can be obtained from particle systems, either by first performing the kinetic limit and then considering longer time scales, or by directly applying the hydrodynamic scaling. The latter method is usually more challenging. An example for the limit consisting of two steps is the work by Basile et al. \cite{basile_diffusion_2014}, where the authors derived first the linear Landau equation from the Lorentz gas and then obtained the heat equation by studying longer time scales. Similarly, Bodineau et al. proved by the intermediate kinetic formulation that the Brownian motion is a \textit{fast relaxation limit} for the system of hard spheres \cite{bodineau_brownian_2016}. Although previous works as for example \cite{van_beijeren_equilibrium_1980,lebowitz_steady_1982} had already justified the validity of the linear Boltzmann equation, the proof by Bodineau et al. provides a quantitative rate of convergence, which was essential in obtaining the diffusive limit in \cite{bodineau_brownian_2016}. In contrast to these two step limits, a typical work considering directly longer times is by Komorowski and Ryzhik \cite{komorowski_diffusion_2006}, where the authors started from a random Hamiltonian flow and proved the convergence of the associated spatial process to the standard Brownian motion. A similar result was obtained in \cite{erdos_quantum_2008} starting from the quantum Lorentz gas in order to derive the heat equation. These mentioned results were derived in the weak coupling limit, whereas Lutsko and Tóth showed an \textit{invariance principle} for the Lorentz gas in the low-density limit using a probabilistic coupling method \cite{lutsko_invariance_2020}. Of course, also the fully non-linear Boltzmann equation was studied in order to derive hydrodynamic limits. It is hard to mention all the important contributions, so that we refer to the chapter by Golse \cite{golse_chapter_2005} and the summery in the book of Saint-Raymond \cite{saint-raymond_boltzmann_2009} for an overview.

In this sense, the Lorentz gas is particularly interesting from a mathematical perspective, since it allows justifying various effective theories under certain scaling limits. Although the derivation of the Landau equation from particle systems is only understood through a truncated BBGKY hierarchy \cite{bobylev_particle_2013,velazquez_non-markovian_2018} or in the linearised version \cite{le_bihan_long_2024}, a careful analysis of this particular toy model could give further insights in understanding emergent phenomena for the non-linear case as well as its hydrodynamic limit.

\section{Model and Notation} \label{sec:ModelAndNotation}
We introduce a small parameter $\varepsilon >0$ representing the ratio between the micro and macroscopic dynamics. 
For $\alpha \in (0,1/2)$, we consider the weak coupling regime by rescaling space, time and the interaction potential according to 
\begin{equation}\label{eq:ScalingWC}
    x \to \varepsilon x, \qquad t \to \varepsilon t, \qquad \phi \to \varepsilon^\alpha \phi
\end{equation} 
and let the density of obstacles diverge as
\begin{equation}
    \mu_\varepsilon := \varepsilon^{-d+1-2\alpha}.
\end{equation}
In this particular scaling, the obstacles change size and their radius becomes $\varepsilon$. 
Therefore, the probability $ \mathbb{P}^\varepsilon_{N,\Sigma}$ of finding exactly $N$ obstacle centres $\bm{c}_N =(c_1,...,c_N)$ inside a bounded measurable region $\Sigma \subset \mathbb{R}^d$ is given by
\begin{equation} \label{eq:ProbMeasurePois}
    \mathbb{P}^\varepsilon_{N,\Sigma}(\dd \bm{c}_N) = e^{-\mu_\varepsilon \abs{\Sigma}} \frac{\mu_\varepsilon^{ N}}{N !} \dd \bm{c}_N,
\end{equation}
where $\dd \bm{c}_N$ is an abbreviation for $\dd c_1 ... \dd c_N$ and $\abs{\Sigma}$ denotes the Lebesgue measure of $\Sigma$. 

We now introduce the interaction of the test particle with the configuration of obstacles. Let  $U$ be a radially symmetric function satisfying
\begin{enumerate}
    \item $U \in C(\mathbb{R}^d) \cap W^{2,\infty} (\mathbb{R}^d)$,
    \item $U(0)>0$ as well as $r \mapsto U(r)$ is strictly decreasing for $r \in [0,1]$ and
    \item $\supp U \subset [0,1]$.
\end{enumerate}
We model the obstacles in the system with a \textit{soft} interaction potential if we centre $U$ at each $c_i \in \Sigma$ and rescale it according to \eqref{eq:ScalingWC}. This gives 
\begin{equation}
    U^\varepsilon_{c_i} (x) = \varepsilon^\alpha U\left(\frac{\abs{x-c_i}}{\varepsilon} \right).
\end{equation}
We go now further and assume that --- in addition to the scattering potential $U^\varepsilon_{c_i}$ --- each obstacle gives rise to a mean-field potential of the following (rescaled) form 
\begin{equation}
        V^\varepsilon_{c_i} (x) =  \frac{1}{\mu_\varepsilon} \Lambda (x-c_i),
\end{equation}
where $\Lambda$ satisfies
\begin{enumerate}
        \item $\Lambda \in C(\mathbb{R}^d) \cap W^{2,\infty}(\mathbb{R}^d)$ as well as 
        \item $\Lambda = \Lambda(x)$ is radially symmetric and non-increasing in $\abs{x}$.
\end{enumerate}
In fact, $V^\varepsilon_{c_i}$ influences the test particle's motion and contributes to the total force field as
\begin{equation} \label{eq:MeanFieldForce}
    F^\varepsilon_{\Lambda} (x) = \frac{1}{\mu_\varepsilon} \sum_{i=1}^N \nabla \Lambda (x-c_i).
\end{equation}

Given the initial position $x_0$ and velocity $v_0$, we are interested in the Hamiltonian flow $\mathtt{U}_{\bm{c}_N,\varepsilon}^{t}$, which defines the trajectory at time $t$ starting from the pair $(x_0,v_0)$. More precisely, 
\begin{equation}
    (x^\varepsilon(t),v^\varepsilon(t)) := \mathtt{U}_{\bm{c}_N,\varepsilon}^{t}(x_0,v_0),
\end{equation}
where $(x^\varepsilon(t),v^\varepsilon(t))$ satisfies the following equations of motion
\begin{equation}
    \begin{cases}
        \dot{x^\varepsilon}(t) = v^\varepsilon(t)\\
        \dot{v^\varepsilon}(t) = - \sum_{i = 1}^N \left \{ \varepsilon^{\alpha -1} \nabla U\left(\frac{\abs{x-c_i}}{\varepsilon} \right) +\frac{1}{\mu_\varepsilon} \nabla \Lambda (x-c_i)\right \}\\
        (x^\varepsilon(0),v^\varepsilon(0)) = (x_0,v_0).
    \end{cases}
\end{equation}

For a given initial probability density $f_0 = f_0(x,v)$, the law $f_\varepsilon$ of the test particle moving through the random configuration of spherical obstacles is given through
\begin{equation} \label{eq: Expectaion Particle Evolution}
    f_\varepsilon(x,v,t) := \mathbb{E}_\varepsilon [f_0(\mathtt{U}_{\bm{c}_N,\varepsilon}^{-t}(x_0,v_0))],
\end{equation}
where the expectation $\mathbb{E}_\varepsilon [\cdot]$ is taken with respect to the measure in \eqref{eq:ProbMeasurePois}.

The work in preparation \cite{nowak_lorentz_2024} shows that if $\alpha \in (0,(d-1)/8)$, then $f_\varepsilon$ converges to $f=f(x,v,t)$ in $L^1(\mathbb{R}^d \times S^{d-1}_\abs{v})$ for any finite time as $ \varepsilon \to 0$, where $f$ satisfies the linear Landau-Vlasov equation
\begin{equation} \label{eq:Limiting Linear Landau}
	(\partial_{t} + v \cdot \nabla_{x} - \nabla_{x} \Phi	 \cdot \nabla_{v})f = B \Delta_{\abs{v}} f.
\end{equation}
$\Phi$ is the limiting mean-field potential given by the following convolution
\begin{equation} 
    \Phi := \mathbbm{1}_\Sigma * \Lambda
\end{equation}
and $ \Delta_{\abs{v}} $ is the Laplace-Beltrami operator defined on the $d$-dimensional sphere with radius $\abs{v}$.
The diffusion coefficient $B$ depends on the \textit{soft} interaction potential of the obstacles and we refer to \cite{desvillettes_rigorous_2001} for the derivation of an explicit formula of $B$ from $U^\varepsilon_{c_i}$.
From now on, we will denote $B \Delta_\abs{v}$ by $\mathcal{L}$ to enhance the readability. 

In the derivation of \eqref{eq:Limiting Linear Landau} from the Lorentz gas it is of main importance to control the test particle's \textit{memory effects}. 
Although in \cite{nowak_lorentz_2024} the error terms describing the mentioned memory effects vanish as $\varepsilon \to 0$ only for finite times, it is evident from the estimates that this holds true even at the hydrodynamic scale.

Indeed, for long time scales, we expect another effective equation to describe the diffusive behaviour of the system.
Instead of the diffusion in velocity generated by $\Delta_{\abs{v}}$ (see equation \eqref{eq:Limiting Linear Landau}), we expect a randomisation in space. 
The proof is straightforward and we will follow the existing literature by \cite{basile_diffusion_2014} as well as \cite{esposito_chapter_2005}, where the authors were able to derive the heat equation starting from the linear Boltzmann equation. 
The novelty in this short treatise is the presence of the additional mean-field background, which does not appear in the limiting equation. 
Furthermore, the smoothness of the scattering potential allows a strong divergence of the scaling parameter $\eta_\varepsilon$ as $\varepsilon \to 0$, which relates the kinetic to the hydrodynamic description.

This article is organised as follows: In the next section, we motivate the diffusive scaling, which allows studying the hydrodynamic limit and state the main result.
We conclude the paper with section \ref{sec:ProofTHM}, which is devoted to the proof of Theorem \ref{thm:HeatEquation}.
\section{Result}
Let us start from the Lorentz gas introduced in section \ref{sec:ModelAndNotation}. In \cite{nowak_lorentz_2024}, the authors showed that if the intensity of the Poisson distribution equals $\mu_\varepsilon = \varepsilon^{-d+1-2\alpha}$ as well as in the hyperbolic scaling:
\begin{equation}
    x \to \varepsilon x, \qquad t \to \varepsilon t
\end{equation}
$f_\varepsilon$ in \eqref{eq: Expectaion Particle Evolution} and $h_\varepsilon$ satisfying the linear Boltzmann-Vlasov equation
\begin{equation} \label{eq: Intermediate Hyperbolic Boltzmann}
    (\partial_t + v \cdot \nabla_x - \nabla_x \Phi  \cdot \nabla_v) h_\varepsilon = L h_\varepsilon,
\end{equation}
have the same asymptotic limit (in the absence of \textit{pathological} events) as $\varepsilon \to 0$. Here $L$ is the linear collision operator: 
\begin{equation} \label{eq: Definition Collision integral}
    L f := \varepsilon^{-2 \alpha}  \int_{-1}^1 \abs{v}\left \{f(v') - f(v)\right\} \dd \rho
\end{equation}
and $\rho$ denotes the impact parameter as the test particle collides with an obstacle. The post-collision velocity $v'$ is given through the scattering vector $\omega$ associated to the soft interaction potential as follows:
\begin{equation}
    v' := v - 2 (\omega \cdot v) \omega.
\end{equation}
Finally, it is worth mentioning that the prefactor in front of the integral in \eqref{eq: Definition Collision integral} is equal to the inverse of the mean free path.

We are interested in the transport of mass density, so that we take the integral with respect to $v$ of \eqref{eq: Intermediate Hyperbolic Boltzmann}. 
Let us for the moment neglect any normalisation constants in front of the integrals.
First, we realise that 
\begin{equation}
    \int \nabla_x \Phi  \cdot \nabla_v h_\varepsilon \dd v = 0
\end{equation}
by the divergence theorem.
If we further define $\varrho_\varepsilon := \int f_\varepsilon \dd v$ and apply Fubini's theorem, we end up with
\begin{equation} \label{eq: Integrated hyp Boltzmann}
    \partial_t \varrho_\varepsilon + \int v \cdot \nabla_x h_\varepsilon \dd v = 0  = L \int h_\varepsilon \dd v = \int L h_\varepsilon \dd v.
\end{equation}
Indeed, equation \eqref{eq: Integrated hyp Boltzmann} implies that $L h_\varepsilon = 0$ almost everywhere and therefore the limiting function $h$ of \eqref{eq: Intermediate Hyperbolic Boltzmann} must also satisfy $L h = 0$. This means that $h$ has to be proportional to the limiting mass density $\varrho := \lim_{\varepsilon \to 0} \varrho_\varepsilon$ and cannot be a function of the velocity. A simple symmetry argument yields further
\begin{equation}
    \int v \cdot \nabla_x h \dd v = 0,
\end{equation}
since the integrand is odd in $v$.
Finally, we obtain --- as the limiting equation of \eqref{eq: Integrated hyp Boltzmann} --- that
\begin{equation} \label{eq: hyperbolic scaling rho}
    \partial_t \varrho =0
\end{equation}
and therefore we expect no hydrodynamics emerging in this particular scaling. 

In order to experience a hydrodynamic phenomenon, we have to consider the additional diffusive scaling and higher densities:
\begin{equation} \label{eq: Diffusive Scaling Intermediate Boltzmann}
     t \to \eta_\varepsilon^{-\delta}  t, \qquad \mu_\varepsilon = \varepsilon^{-d+1-2 \alpha} \eta_\varepsilon^{\delta}.
\end{equation}
The new time scale and density imply that $\Tilde{h}_\varepsilon(x,v,t) := h_\varepsilon(x,v,t \eta_\varepsilon^{-\delta})$ satisfies
\begin{equation} \label{eq: Diffusive scaling Boltzmann}
    (\partial_t + \eta_\varepsilon^\delta v \cdot \nabla_x - \eta_\varepsilon^\delta \nabla_x \Phi  \cdot \nabla_v) \Tilde{h}_\varepsilon = \eta_\varepsilon^{2\delta} L \Tilde{h}_\varepsilon.
\end{equation}
Considering now \eqref{eq: Diffusive Scaling Intermediate Boltzmann} and the previous discussion, we arrive at the following result:
\begin{theorem}\label{thm:HeatEquation}
    Let $f_0 \in C_0(\mathbb{R}^d \times \mathbb{R}^d)$ be a compactly supported initial probability density. Suppose that $f_0$ has two bounded derivatives with respect to $x$ and $v$, then we prove that
    \begin{equation} \label{eq: Reaching equilibrium Particle Density}
        \lim_{\varepsilon \to 0} f_{\varepsilon}(x,v,t) = \langle f_0 \rangle := K \int_{S^{d-1}_\abs{v}}  f_0(x,v)\dd v
    \end{equation}
    for all $t \in (0,T]$ with $T >0$ in $L^2(\mathbb{R}^d\times S^{d-1}_\abs{v} )$ and $K^{-1} := \frac{2 \pi^{d/2}}{\Gamma(d/2) \abs{v}^{d-1}}$. 
    Furthermore, if we investigate higher densities $\mu_\varepsilon =\varepsilon^{-d+1-2\alpha} \eta_\varepsilon^\delta$ as well as define $F_{\varepsilon}(x,v,t) := f_\varepsilon(x,v,t \eta_\varepsilon^{-\delta})$, with $\delta >0$ we can show, as long as
    \begin{equation}
        \varepsilon^{d-1-8 \alpha} \eta^{4 \delta}_\varepsilon \xrightarrow[]{\varepsilon \to 0} 0,
    \end{equation}
    that for any $t \in [0,T)$
    \begin{equation}
        \norm{F_\varepsilon(x,v,t) - \varrho(x,t)}_{L^2(\mathbb{R}^d \times S^{d-1}_\abs{v})} \xrightarrow[]{\varepsilon \to 0} 0,
    \end{equation}
    where $\varrho$ is the solution of the heat equation
    \begin{equation} \label{eq:HeatEquation}
        \begin{cases}
            \partial_t \varrho - D \Delta \varrho = 0\\
            \varrho(x,0) = \langle f_0 \rangle.
        \end{cases}
    \end{equation}
    The diffusion coefficient $D$ is given by the Green-Kubo relation:
    \begin{equation} \label{eq:DiffusionCoefficient}
        D= -K\int_{S^{d-1}_\abs{v}}  v \cdot \mathcal{L}^{-1} v \dd v =  \int_0^\infty \mathbb{E} [v \cdot V_t(v)] \dd t,
    \end{equation}
    where $V_t(v)$ is the stochastic process associated to the generator $\mathcal{L}$ at time $t$ and starting from $v$. The expectation in \eqref{eq:DiffusionCoefficient} is taken with respect to the measure of the sphere $S^{d-1}_\abs{v}$.
\end{theorem}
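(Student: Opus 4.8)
\section{Proof proposal}

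The plan is to run the standard energy (Hilbert--Chapman--Enskog) method for the diffusive limit of a linear kinetic equation, adapted to the mean-field drift, and to import from \cite{nowak_lorentz_2024} the comparison between the particle law $f_\varepsilon$ and the solution $h_\varepsilon$ of the Boltzmann--Vlasov equation \eqref{eq: Intermediate Hyperbolic Boltzmann}. I work throughout in $L^2(\mathbb{R}^d \times S^{d-1}_{|v|})$ and exploit the structure of the collision operator: writing $P g := \langle g \rangle$ for the orthogonal projection onto $\ker \mathcal{L}$ (the functions constant on each speed sphere) and $Q := \Id - P$, the operator $\mathcal{L} = B \Delta_{|v|}$ is self-adjoint, nonpositive, possesses a spectral gap $-\langle g, \mathcal{L} g \rangle \ge c\,\|Q g\|^2$, and is boundedly invertible on $\operatorname{ran} Q$. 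The two first-order terms --- free transport $v \cdot \nabla_x$ and the mean-field drift $\nabla_x \Phi \cdot \nabla_v$ --- are antisymmetric in the $L^2$ pairing after integrating by parts in $x$ and $v$ respectively, so they leave the basic energy identity untouched and all dissipation is supplied by $\mathcal{L}$.

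For the first statement I would argue directly on \eqref{eq: Intermediate Hyperbolic Boltzmann}. Pairing it with $h_\varepsilon$ gives $\tfrac{d}{dt}\big(\tfrac12 \|h_\varepsilon\|^2\big) \le -c\,\varepsilon^{-2\alpha}\|Q h_\varepsilon\|^2$, whence $\int_0^T \|Q h_\varepsilon\|^2\,dt = O(\varepsilon^{2\alpha}) \to 0$: the diverging collision frequency projects the solution onto $\ker \mathcal{L}$ instantaneously. The macroscopic part $\varrho_\varepsilon = P h_\varepsilon$ obeys $\partial_t \varrho_\varepsilon = -\nabla_x \cdot \int v\, Q h_\varepsilon\,dv$, which by the computation leading to \eqref{eq: hyperbolic scaling rho} drives it to the stationary profile $\langle f_0 \rangle$. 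Transferring this conclusion to $f_\varepsilon$ through the shared asymptotics yields \eqref{eq: Reaching equilibrium Particle Density} for every $t \in (0,T]$; the exclusion of $t=0$ reflects the initial layer, since $f_\varepsilon(\cdot,\cdot,0) = f_0 \ne \langle f_0 \rangle$ in general.

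For the diffusive statement I would work with $\Tilde{h}_\varepsilon$ from \eqref{eq: Diffusive scaling Boltzmann} and decompose $\Tilde{h}_\varepsilon = \varrho_\varepsilon + \Tilde{h}_\varepsilon^{\perp}$ with $\varrho_\varepsilon = P \Tilde{h}_\varepsilon$ and $\Tilde{h}_\varepsilon^{\perp} = Q \Tilde{h}_\varepsilon$. Applying $Q$ to \eqref{eq: Diffusive scaling Boltzmann} and retaining the leading order in $\eta_\varepsilon^{-\delta}$ gives the constitutive relation $\Tilde{h}_\varepsilon^{\perp} \approx \eta_\varepsilon^{-\delta}\, \mathcal{L}^{-1}(v \cdot \nabla_x \varrho_\varepsilon)$, in which the mean-field drift enters only through $\nabla_x \Phi \cdot \nabla_v \varrho_\varepsilon = 0$ --- a function of $|v|$ has vanishing velocity gradient --- and therefore disappears in the limit, as anticipated. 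Applying $P$ gives the conservation law $\partial_t \varrho_\varepsilon + \eta_\varepsilon^{\delta}\,\nabla_x\cdot J_\varepsilon = 0$ for the current $J_\varepsilon := K\!\int v\,\Tilde{h}_\varepsilon^{\perp}\,dv$; the constitutive relation turns this into $J_\varepsilon \approx \eta_\varepsilon^{-\delta}\big(K\!\int v\otimes\mathcal{L}^{-1}v\,dv\big)\nabla_x\varrho_\varepsilon$, so that after the exact cancellation of the $\eta_\varepsilon^{\pm\delta}$ factors and the reduction of the tensor to a scalar by isotropy of $\mathcal{L}$, the function $\varrho_\varepsilon$ solves the heat equation \eqref{eq:HeatEquation} up to $o(1)$ with the Green--Kubo diffusivity \eqref{eq:DiffusionCoefficient}. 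The probabilistic form follows by writing $\mathcal{L}^{-1} v = -\int_0^\infty e^{t\mathcal{L}} v\,dt$ and recognising $e^{t\mathcal{L}} v = \mathbb{E}[V_t(v)]$. I would close the argument by establishing uniform-in-$\varepsilon$ $L^2$ bounds on $\varrho_\varepsilon$ and on $\eta_\varepsilon^{\delta}\Tilde{h}_\varepsilon^{\perp}$, extracting a weak limit, passing to the limit in the weak formulation of the closed equation, and invoking uniqueness for \eqref{eq:HeatEquation} to identify the limit as $\varrho$.

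The principal difficulty is not the formal expansion but its rigorous control over the stretched time horizon $t \sim \eta_\varepsilon^{-\delta}$. Two issues require care. First, the remainder in the constitutive relation is generated by $\partial_t \Tilde{h}_\varepsilon^{\perp}$ and by $Q$ acting on the mean-field and higher transport terms; bounding it uses coercivity of $\mathcal{L}$ together with regularity of $\varrho_\varepsilon$ propagated from the $C^2$ data $f_0$, and one must verify that the quadratic error $\eta_\varepsilon^{2\delta}\|\Tilde{h}_\varepsilon^{\perp}\|^2$ stays $o(1)$. Second, and most delicate, $h_\varepsilon$ only approximates the true law $f_\varepsilon$: the memory-effect errors of \cite{nowak_lorentz_2024} must be shown to remain negligible after the time has been dilated by $\eta_\varepsilon^{-\delta}$, which simultaneously legitimises replacing the Boltzmann operator by its Landau limit $\mathcal{L}$ at this order. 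Tracking how those bounds scale under \eqref{eq: Diffusive Scaling Intermediate Boltzmann} is exactly what produces the hypothesis $\varepsilon^{d-1-8\alpha}\eta_\varepsilon^{4\delta} \to 0$, ensuring that the memory error accumulated over the diffusive time vanishes.
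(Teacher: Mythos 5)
There are two genuine gaps in your proposal, both concerning where the various smallness factors actually come from.

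First, your coercivity claim $\tfrac{\dd}{\dd t}\bigl(\tfrac12\norm{h_\varepsilon}^2\bigr)\le -c\,\varepsilon^{-2\alpha}\norm{Qh_\varepsilon}^2$ misreads the scaling of the collision operator \eqref{eq: Definition Collision integral}. Up to the $x$-integration, its Dirichlet form is $-(h,Lh)=\tfrac{\varepsilon^{-2\alpha}}{2}\int_{S^{d-1}_{\abs{v}}}\int_{-1}^{1}\abs{v}\,\abs{h(v')-h(v)}^2\dd\rho\dd v$, and in the weak-coupling regime the deflection obeys $\theta(\rho)\le C\varepsilon^{\alpha}$, so $\abs{v'-v}^2=4\abs{v}^2\sin^2(\theta/2)=O(\varepsilon^{2\alpha})$: the divergent prefactor is exactly cancelled by the grazing collisions, and the spectral gap of $L$ is $O(1)$ uniformly in $\varepsilon$ (test the Rayleigh quotient on the first spherical harmonic). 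This is precisely why $L$ has the \emph{finite} Landau limit $\mathcal{L}=B\Delta_{\abs{v}}$ with $B=O(1)$, cf.\ the expansion in the proof of Proposition \ref{prop: Comparision Operators}. Consequently your dissipation bound is $\int_0^T\norm{Qh_\varepsilon}^2\dd t=O(1)$, not $O(\varepsilon^{2\alpha})$, and no relaxation as $\varepsilon\to0$ at fixed $t$ follows from it; on the hyperbolic scale the solution relaxes only like $e^{-\lambda t}$ with $\lambda$ independent of $\varepsilon$. In the paper the divergent rate is produced by the diffusive rescaling, not by $\varepsilon$: Proposition \ref{prop:ConvergenceToAverage} works with \eqref{eq:RescaledLinearLandauCauchy}, whose dissipation carries the explicit factor $\eta^{2\delta}$, and the Gronwall argument gives the pointwise-in-$t$ bound $\norm{R_\eta(t)}_{L^2}\le e^{-\lambda\eta^{2\delta}t}\norm{R_\eta(0)}_{L^2}+C/(\lambda\eta^{\delta})$. (Even granting your constant, an integrated-in-time bound yields convergence only for almost every $t$, short of the claim for every $t\in(0,T]$.)

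Second, in the diffusive part you write the constitutive relation with $\mathcal{L}^{-1}$ while closing on the Boltzmann equation \eqref{eq: Diffusive scaling Boltzmann}, and you assert that replacing $L$ by $\mathcal{L}$ is ``legitimised'' by the same hypothesis $\varepsilon^{d-1-8\alpha}\eta_\varepsilon^{4\delta}\to0$ that controls the memory effects. These are two independent error sources, and that hypothesis controls only the second (the comparison of $f_\varepsilon$ with $h_\varepsilon$ from \cite{nowak_lorentz_2024}). What your closure actually produces is $L^{-1}$; passing to $\mathcal{L}^{-1}$ is a separate quantitative step, carried out in Proposition \ref{prop: Comparision Operators} by Gronwall plus the grazing-collision expansion, at the price of \eqref{eq:MatchBoltzmannLandau}: $\sup_{t\le T}\norm{\Tilde h_\varepsilon-g_{\eta_\varepsilon}}_{L^2}\le C\eta_\varepsilon^{2\delta}\varepsilon^{2\alpha}T$, i.e.\ one needs \emph{in addition} $\eta_\varepsilon^{2\delta}\varepsilon^{2\alpha}\to0$. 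This constraint is not implied by $\varepsilon^{d-1-8\alpha}\eta_\varepsilon^{4\delta}\to0$; for $\alpha<(d-1)/12$ one can choose $\eta_\varepsilon$ satisfying the latter but violating the former, so your plan must state and prove this estimate separately. Apart from these two points, your Chapman--Enskog/moment route --- projection, constitutive relation, conservation law, weak compactness, uniqueness for \eqref{eq:HeatEquation} --- is a viable alternative to the paper's architecture, which instead runs a truncated Hilbert expansion on the intermediate Landau problem \eqref{eq:RescaledLinearLandauCauchy} (Proposition \ref{prop: Hilbert Expansion}) and only afterwards compares the two kinetic equations; the paper's ordering has the advantage that all regularity is demanded only of $g^{(0)}$, which solves the closed heat equation, and that it yields a quantitative $L^\infty_t L^2$ rate, whereas your route must propagate uniform bounds for $\varrho_\varepsilon$ and the current, which solve only an approximately closed system, over the stretched time horizon.
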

\begin{remark}
    Equation \eqref{eq: Reaching equilibrium Particle Density} summarises the result derived in \eqref{eq: hyperbolic scaling rho}. Additionally,
    Theorem \ref{thm:HeatEquation} tells us that $f_\varepsilon$ converges \textit{quickly} to the system's equilibrium $\langle f_0 \rangle$. In order to notice a non-trivial behaviour, we must consider the additional scaling of time $ t \to t \eta_\varepsilon^{-\delta}$ as well as higher densities $\mu_\varepsilon = \varepsilon^{-d+1-2\alpha} \eta_\varepsilon^\delta$, where $\delta > 0$.
\end{remark}
\section{Proof of Theorem \ref{thm:HeatEquation}} \label{sec:ProofTHM}
From now on, we will fix $\delta > 0$ and denote any positive constant independent of $t$ by $C>0$. 
Instead of studying directly \eqref{eq: Diffusive scaling Boltzmann}, it is more convenient to introduce the following Cauchy problem associated to the linear Landau equation:
\begin{equation} \label{eq:RescaledLinearLandauCauchy}
    \begin{cases}
        (\partial_t + \eta^\delta v \cdot \nabla_x - \eta^\delta \nabla_x \Phi \cdot \nabla_v)g_\eta(x,v,t) = \eta^{2 \delta}\mathcal{L} g_\eta(x,v,t)\\
        g_\eta(t=0) = f_0(x,v),
    \end{cases}
\end{equation}
with $\eta \equiv \eta_\varepsilon$ in order to enhance the readability. Furthermore, we remind ourselves that $\mathcal{L} := B \Delta_{\abs{v}}$. We note immediately that the solution $g_\eta$ of \eqref{eq:RescaledLinearLandauCauchy} inherits the regularity of $f_0$ and gains regularity with respect to the transverse component of $v$ as a consequence of the Laplace-Beltrami operator.

For the particular Cauchy problem in \eqref{eq:RescaledLinearLandauCauchy} we show
\begin{proposition}\label{prop:ConvergenceToAverage}
    Let us define $\langle g_\eta \rangle := K \int_{S^{d-1}_\abs{v}}  g_\eta(x,v,t)\dd v$ with $g_\eta$ satisfying \eqref{eq:RescaledLinearLandauCauchy}. Under the assumptions of Theorem \ref{thm:HeatEquation}, we prove for all $t \in (0,T]$ that
    \begin{equation}
        g_\eta - \langle g_\eta \rangle \xrightarrow[]{\eta \to \infty} 0 \quad \text{in} \quad L^2(\mathbb{R}^d \times S^{d-1}_\abs{v}).
    \end{equation}
    Furthermore, if we define $t_\eta := \frac{1}{\eta^\omega}$ with $\omega >2\delta$, we deduce that
    \begin{equation}
        g_\eta(t_\eta) - \langle f_0\rangle \xrightarrow[]{\eta \to \infty} 0 \quad \text{in} \quad L^2(\mathbb{R}^d \times S^{d-1}_\abs{v}),
    \end{equation}
    where $\langle f_0\rangle := K\int_{S^{d-1}_\abs{v}}  f_0(x,v)\dd v$.
\end{proposition}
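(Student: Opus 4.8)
The plan is to treat \eqref{eq:RescaledLinearLandauCauchy} as a singularly perturbed kinetic equation and exploit the spectral gap of the velocity-diffusion operator $\mathcal{L}$. Write $P g := \langle g \rangle$ for the orthogonal projection onto $\ker\mathcal{L}$ (the functions constant on each velocity sphere), set $Q := \Id - P$ and $w_\eta := Q g_\eta = g_\eta - \langle g_\eta\rangle$. Two structural facts drive everything. First, $-\mathcal{L}$ is a nonnegative self-adjoint operator whose kernel is the range of $P$, and the Laplace--Beltrami operator satisfies the Poincaré inequality on the sphere, so there is a spectral gap $\gamma>0$ with $-(w,\mathcal{L}w)_{L^2}\ge\gamma\norm{w}^2$ for every $w=Qw$; since $f_0$ is compactly supported, $\abs{v}$ ranges over a fixed compact set and $\gamma$ may be taken uniform. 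Second, the transport-force operator $\mathcal{T} := v\cdot\nabla_x - \nabla_x\Phi\cdot\nabla_v$ is skew-adjoint in $L^2$ (integrate by parts in $x$ and use $\nabla_v\cdot\nabla_x\Phi = 0$), and a short parity computation gives $P\mathcal{T}P = 0$ because both $\langle v\rangle$ and $\langle v/\abs{v}\rangle$ vanish on the sphere.

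First I would record the basic energy identity. Testing \eqref{eq:RescaledLinearLandauCauchy} against $g_\eta$, the skew-adjointness of $\mathcal{T}$ annihilates the first-order terms and only the dissipative term survives,
\begin{equation}
    \tfrac12\frac{\dd}{\dd t}\norm{g_\eta}^2 = \eta^{2\delta}(g_\eta,\mathcal{L}g_\eta)_{L^2} = \eta^{2\delta}(w_\eta,\mathcal{L}w_\eta)_{L^2}\le -\gamma\,\eta^{2\delta}\norm{w_\eta}^2 ,
\end{equation}
where I used $\mathcal{L}P=0$ and self-adjointness. Integrating in time gives the a priori bound $\int_0^t\norm{w_\eta(s)}^2\,\dd s\le \norm{f_0}^2/(2\gamma\eta^{2\delta})$. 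To upgrade this to the pointwise-in-$t$ statement I would run an energy estimate on $w_\eta$ directly. Applying $Q$ to \eqref{eq:RescaledLinearLandauCauchy} yields $\partial_t w_\eta = -\eta^\delta Q\mathcal{T}g_\eta + \eta^{2\delta}\mathcal{L}w_\eta$, and testing against $w_\eta$ (with $(w_\eta,\mathcal{T}w_\eta)=0$) leaves only the cross term with the macroscopic part, so that after Young's inequality against the gap term,
\begin{equation}
    \tfrac12\frac{\dd}{\dd t}\norm{w_\eta}^2 \le -\tfrac12\gamma\,\eta^{2\delta}\norm{w_\eta}^2 + C\,\norm{\mathcal{T}\langle g_\eta\rangle}^2 .
\end{equation}

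The first assertion then follows by Grönwall: for fixed $t\in(0,T]$ one obtains $\norm{w_\eta(t)}^2\le e^{-\gamma\eta^{2\delta}t}\norm{f_0-\langle f_0\rangle}^2 + C\eta^{-2\delta}\sup_s\norm{\mathcal{T}\langle g_\eta\rangle}^2$, and both terms vanish as $\eta\to\infty$ provided $\norm{\mathcal{T}\langle g_\eta\rangle}$ stays bounded uniformly. For the second assertion I would split $g_\eta(t_\eta)-\langle f_0\rangle = w_\eta(t_\eta) + \bigl(\langle g_\eta(t_\eta)\rangle - \langle f_0\rangle\bigr)$ and resolve an initial layer. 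The fluctuation $w_\eta(t_\eta)$ is governed by the same Grönwall bound and is small once the collision operator has acted long enough, i.e. $\eta^{2\delta}t_\eta\to\infty$; the macroscopic drift is controlled by the equation $\partial_t\langle g_\eta\rangle = -\eta^\delta P\mathcal{T}w_\eta$ (here $P\mathcal{T}P=0$ is essential), so that $\norm{\langle g_\eta(t_\eta)\rangle-\langle f_0\rangle}\le\eta^\delta\int_0^{t_\eta}\norm{P\mathcal{T}w_\eta}\,\dd s$, which is rendered negligible by combining the a priori $L^2_t$ bound on $w_\eta$ with the shrinking window $t_\eta\to0$. The role of $\omega$ is precisely to separate these two time scales, placing $t_\eta$ above the relaxation time $\eta^{-2\delta}$ yet below the $O(1)$ time on which the density diffuses.

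The main obstacle I expect is obtaining the uniform-in-$\eta$ regularity needed to close the cross-term and drift estimates: both $\norm{\mathcal{T}\langle g_\eta\rangle}$ and $\norm{P\mathcal{T}w_\eta}$ carry one $x$-derivative (and a radial $v$-derivative) of $g_\eta$, so I must propagate $H^1$ bounds that do not deteriorate under the large coefficients $\eta^\delta,\eta^{2\delta}$. The favourable structure is that differentiating \eqref{eq:RescaledLinearLandauCauchy} in $x$ reproduces the same skew-symmetric transport plus dissipative operator, generating only lower-order commutator terms through $\nabla_x^2\Phi$ (bounded since $\Phi\in W^{2,\infty}$) and through the Laplace--Beltrami smoothing that absorbs the $v$-derivatives. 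Carrying out this derivative-level energy estimate uniformly in $\eta$, together with making the uniform spectral gap $\gamma$ precise on the compact velocity support, is the technical heart of the argument.
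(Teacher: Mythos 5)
Your treatment of the first assertion is, up to notation, the paper's own argument. The paper sets $R_\eta := g_\eta - \langle g_\eta \rangle$, derives the equation $(\partial_t + \eta^\delta v\cdot\nabla_x - \eta^\delta\nabla_x\Phi\cdot\nabla_v)R_\eta = \eta^{2\delta}\mathcal{L}R_\eta + \varphi$ with $\varphi = \eta^\delta\bigl(K\int_{S^{d-1}_{\abs{v}}} v\cdot\nabla_x g_\eta\,\dd v - v\cdot\nabla_x\langle g_\eta\rangle\bigr)$ --- which is exactly your $-\eta^\delta Q\mathcal{T}g_\eta$ once the skew part is annihilated in the pairing --- and then uses the spectral gap $\lambda$ of $-\mathcal{L}$ plus Gr\"onwall to obtain $\norm{R_\eta(t)}_{L^2}\le e^{-\lambda\eta^{2\delta}t}\norm{R_\eta(0)}_{L^2} + C\lambda^{-1}\eta^{-\delta}$. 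Your Young-against-the-gap variant yields the same $\eta^{-\delta}$ rate. Both versions rest on a uniform-in-$\eta$ bound for $\norm{\nabla_x g_\eta}_{L^2}$, which the paper simply asserts (as regularity inherited from $f_0$) and which you correctly flag and sketch via the differentiated energy estimate; your commutator observation (only $\nabla_x^2\Phi$ enters, bounded since $\Lambda\in W^{2,\infty}$) is the right way to close this. So far, same approach, and your bookkeeping is if anything more careful.

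The second assertion is where your proposal has a genuine gap: you have the time-scale condition backwards. With $t_\eta = \eta^{-\omega}$ and $\omega > 2\delta$ as in the statement, one has $\eta^{2\delta}t_\eta = \eta^{2\delta-\omega}\to 0$, so $t_\eta$ sits \emph{below} the relaxation time $\eta^{-2\delta}$, not above it as you claim ("placing $t_\eta$ above the relaxation time"). Consequently $e^{-\gamma\eta^{2\delta}t_\eta}\to 1$ and your Gr\"onwall bound gives no smallness for $w_\eta(t_\eta)$: the initial layer has not yet resolved at time $t_\eta$, and your split $g_\eta(t_\eta)-\langle f_0\rangle = w_\eta(t_\eta) + (\langle g_\eta(t_\eta)\rangle - \langle f_0\rangle)$ leaves the first piece of order $\norm{f_0-\langle f_0\rangle}$. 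What your mechanism (damped fluctuation plus negligible macroscopic drift) actually proves is the claim in the complementary regime $\eta^{2\delta}t_\eta\to\infty$, i.e.\ $\omega<2\delta$. The paper exploits the hypothesis $\omega>2\delta$ in the opposite way: it does not rely on damping at $t_\eta$ at all, but proves the short-time stability bound
\begin{equation}
    \norm{g_\eta(t)-f_0}_{L^2} \le t\left(\eta^\delta\abs{v}\norm{\nabla_x f_0}_{L^2} + \eta^\delta\norm{\nabla_x\Phi}_{L^2}\norm{\nabla_v f_0}_{L^2} + \eta^{2\delta}\norm{\mathcal{L}f_0}_{L^2}\right),
\end{equation}
which vanishes at $t=t_\eta$ precisely because $t_\eta\eta^{2\delta}\to 0$, deduces $\langle g_\eta(t_\eta)\rangle\to\langle f_0\rangle$, and then combines this with the first assertion by a triangle inequality. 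If you rewrite your argument, you must either adopt this short-time route for $\omega>2\delta$, or be explicit that your relaxation-based route addresses a different range of $\omega$; as written, the proof of the second claim does not establish the stated result.
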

\begin{proof}
    Let us introduce $R_\eta := g_\eta - \langle g_\eta \rangle$. 
    By applying \eqref{eq:RescaledLinearLandauCauchy} to $R_\eta$, we obtain
    \begin{equation}
        \begin{aligned}
            (\partial_t + \eta^\delta v \cdot \nabla_x - \eta^\delta \nabla_x \Phi \cdot \nabla_v) R_\eta &= (\partial_t + \eta^\delta v \cdot \nabla_x - \eta^\delta \nabla_x \Phi \cdot \nabla_v)(g_\eta - \langle g_\eta \rangle)\\
            &= \eta^{2\delta} \mathcal{L} g_\eta - (\partial_t + \eta^\delta v \cdot \nabla_x - \eta^\delta \nabla_x \Phi \cdot \nabla_v) \langle g_\eta \rangle\\
            &= \eta^{2\delta} \mathcal{L}( g_\eta - \langle g_\eta \rangle) - (\partial_t + \eta^\delta v \cdot \nabla_x) \langle g_\eta \rangle,
        \end{aligned}
    \end{equation}
    where the last equality arises by noting that $\langle g_\eta \rangle$ is not a function of $v$. The expression from above reduces to
    \begin{equation}
        (\partial_t + \eta^\delta v \cdot \nabla_x - \eta^\delta \nabla_x \Phi \cdot \nabla_v) R_\eta = \eta^{2\delta}\mathcal{L} R_\eta + \varphi
    \end{equation}
    with
    \begin{equation}
        \begin{aligned}
            \varphi&=-(\eta^\delta v \cdot \nabla_x \langle g_\eta \rangle + \partial_t \langle g_\eta \rangle) = -\left(\eta^\delta v \cdot \nabla_x \langle g_\eta \rangle + \int_{S^{d-1}_\abs{v}}\partial_t g_\eta \dd v\right)\\
            &= -\eta^\delta v \cdot \nabla_x \langle g_\eta \rangle +K\int_{S^{d-1}_\abs{v}}\eta^\delta (v \cdot \nabla_x - \nabla_x \Phi \cdot \nabla_v)g_\eta\dd v - K \eta^{2\delta}\int_{S^{d-1}_\abs{v}} \mathcal{L} g_\eta \dd v \\
            &=\eta^\delta \left(K\int_{S^{d-1}_\abs{v}}v \cdot \nabla_x g_\eta\dd v - v \cdot \nabla_x \langle g_\eta \rangle \right).
        \end{aligned}
    \end{equation}
    In order to obtain the last line from above, we applied the divergence theorem to $\langle \mathcal{L} g_\eta \rangle$. More precisely, let $\hat{n}$ be the the outward pointing unit normal and $\dd \sigma$ the surface measure on $S^{d-1}_\abs{v}$, then
    \begin{equation} \label{eq:DivTheoremLandauOperator}
        \int_{S^{d-1}_\abs{v}} \mathcal{L}g_\eta \dd v = \int_{\partial S^{d-1}_\abs{v}} \nabla_\abs{v} g_\eta \cdot \hat{n} \dd \sigma = 0,
    \end{equation}
    since $\partial S^{d-1}_\abs{v} = \varnothing$. 
    We find in a similar calculation that the external force $-\eta \nabla_x \Phi \cdot \nabla_v g_\eta$ vanishes too. 
    This leads us to an estimate on $\varphi$:
    \begin{equation}
        \sup_{t \leq T} \norm{\varphi}_{L^2}  \leq \sup_{t \leq T}  C \eta^\delta \norm{\nabla_x g_\eta }_{L^2}  \leq \eta^\delta C T.
    \end{equation}
    From now on, let us refer to the standard inner product in $L^2$ as $(\cdot,\cdot)$. Then, $R_\eta$ satisfies the following ODE
    \begin{equation}\label{eq:ODE Reta}
        \begin{aligned}
            \frac{1}{2} \frac{\dd }{\dd t} \norm{R_\eta(t)}^2_{L^2} &= \eta^{2\delta} (R_\eta,\mathcal{L}R_\eta) + (R_\eta,\varphi) \\
            &\leq -\eta^{2\delta}(R_\eta,-\mathcal{L}R_\eta) + \norm{R_\eta}_{L^2} \norm{\varphi}_{L^2}\\
            &\leq - \eta^{2\delta} \lambda \norm{R_\eta}^2_{L^2} + \norm{R_\eta}_{L^2} \norm{\varphi}_{L^2}.
        \end{aligned}
    \end{equation}
    The last inequality is justified by introducing the smallest eigenvalue $\lambda>0$ corresponding to the positive operator $-\mathcal{L}$.
    We solve \eqref{eq:ODE Reta} and find
    \begin{equation}
        \begin{aligned}
            \norm{R_\eta(t)}_{L^2} &\leq e^{-\lambda\eta^{2\delta} t} \norm{R_\eta(0)}_{L^2} + \int_0^t  e^{-\lambda \eta^{2\delta} (t-s)} \norm{\varphi (s)}_{L^2} \dd s\\
            &\leq e^{-\lambda\eta^{2\delta} t} \norm{R_\eta(0)}_{L^2} + \frac{C}{\lambda \eta^\delta} \left(1-e^{-\lambda \eta^{2\delta} t}\right).
        \end{aligned}
    \end{equation}
    Therefore, $\norm{R_\eta(t)}_{L^2} \xrightarrow[]{\eta \to \infty}0$ for all $t \in (0,T]$ implying  
    \begin{equation} \label{eq:Item1 PropConvergenceToAverage}
        g_\eta - \langle g_\eta \rangle \xrightarrow[]{\eta \to \infty} 0 \quad \text{in} \quad L^\infty ((0,T];L^2(\mathbb{R}^d \times S^{d-1}_{\abs{v}})).
    \end{equation}
    In order to proceed, we consider
    \begin{equation}
        \begin{aligned}
            \frac{1}{2} \frac{\dd}{\dd t} \norm{g_\eta (t) - f_0}^2_{L^2} &= (g_\eta - f_0, (\partial_t + \eta^\delta v \cdot \nabla_x - \eta^\delta \nabla_x \Phi \cdot \nabla_v)(g_\eta-f_0))\\
            &= (g_\eta - f_0,(\partial_t + \eta^\delta v \cdot \nabla_x - \eta^\delta \nabla_x \Phi \cdot \nabla_v)g_\eta)\\
            &- (g_\eta - f_0,( \eta^\delta v \cdot \nabla_x - \eta^\delta \nabla_x \Phi \cdot \nabla_v)f_0),
        \end{aligned}
    \end{equation}
    where we noted that $f_0$ is not a function of time. This leads us to
    \begin{equation} \label{eq:ClosenessTof0 ODE}
        \begin{aligned}
            &\frac{1}{2} \frac{\dd}{\dd t} \norm{g_\eta (t) - f_0}^2_{L^2} \\
            &= (g_\eta - f_0 , \eta^{2\delta} \mathcal{L} g_\eta) - \eta^\delta (g_\eta - f_0 ,  v \cdot \nabla_x f_0) + \eta^\delta(g_\eta - f_0 , \nabla_x \Phi \cdot \nabla_v f_0)\\
            &\leq \eta^{2\delta} (g_\eta - f_0 , \mathcal{L}f_0)- \eta^\delta (g_\eta - f_0 ,  v \cdot \nabla_x f_0) + \eta^\delta (g_\eta - f_0 , \nabla_x \Phi \cdot \nabla_v f_0)\\
            &\leq \norm{g_\eta - f_0}_{L^2} \left(\eta^\delta \abs{v} \norm{\nabla_x f_0}_{L^2}+ \eta^\delta \norm{\nabla_x \Phi}_{L^2}\norm{\nabla_v f_0}_{L^2} + \eta^{2\delta} \norm{\mathcal{L}f_0}_{L^2}\right).
        \end{aligned}
    \end{equation}
    We solve the ODE in \eqref{eq:ClosenessTof0 ODE} and deduce that
    \begin{equation}
        \norm{g_\eta(t) - f_0}_{L^2} \leq t \left(\eta^\delta \abs{v} \norm{\nabla_x f_0}_{L^2}+ \eta^\delta \norm{\nabla_x \Phi}_{L^2}\norm{\nabla_v f_0}_{L^2} + \eta^{2\delta} \norm{\mathcal{L}f_0}_{L^2}\right),
    \end{equation}
    where we have used $g_\eta(0) = f_0$ to determine the integration constant. 
    As an intermediate result, we obtain
    \begin{equation}
        g_\eta(t_\eta) - f_0 \xrightarrow[]{\eta \to \infty} 0 \quad \text{in} \quad L^2(\mathbb{R}^d \times S^{d-1}_{\abs{v}}).
    \end{equation}
    Finally, this yields
    \begin{equation}
        \begin{aligned}
            \norm{g_\eta(t_\eta) - \langle f_0 \rangle}_{L^2} &\leq \norm{g_\eta(t_\eta) - \langle g_\eta (t_\eta)\rangle}_{L^2} + \norm{ \langle g_\eta (t_\eta)\rangle - \langle f_0 \rangle}_{L^2}\\
            &\leq \sup_{0<t \leq T} \norm{g_\eta(t) - \langle g_\eta (t)\rangle}_{L^2}+ \norm{ \langle g_\eta (t_\eta) - f_0 \rangle}_{L^2}\\
            &\leq \sup_{0<t \leq T} \norm{g_\eta(t) - \langle g_\eta (t)\rangle}_{L^2}+ \langle \norm{  g_\eta (t_\eta) - f_0 }_{L^2} \rangle
        \end{aligned}
    \end{equation}
    by the triangle inequality, the linearity of $\langle \cdot \rangle$ as well as \eqref{eq:Item1 PropConvergenceToAverage}. The calculation above proves the second claim of Proposition \ref{prop:ConvergenceToAverage}:
    \begin{equation}
        \norm{g_\eta (t_\eta) - \langle f_0 \rangle}_{L^2(\mathbb{R}^d \times S^{d-1}_\abs{v})} \xrightarrow{\eta \to \infty} 0 .
    \end{equation}
\end{proof}
We continue proving Theorem \ref{thm:HeatEquation} by using the result below. The next proposition tells us that the solution to \eqref{eq:RescaledLinearLandauCauchy} is asymptotically close to $\varrho$ satisfying the heat equation in \eqref{eq:HeatEquation}.
\begin{proposition} \label{prop: Hilbert Expansion}
    Let $g_\eta$ be the solution to \eqref{eq:RescaledLinearLandauCauchy}. Then,  
    \begin{equation}
        g_\eta - \varrho \xrightarrow[]{\eta \to \infty} 0 \quad \text{in} \quad L^\infty ([0,T];L^2(\mathbb{R}^d \times S^{d-1}_{\abs{v}})),
    \end{equation}
    where $\varrho$ satisfies \eqref{eq:HeatEquation}. The diffusion coefficient $D$ is given by the Green-Kubo relation in \eqref{eq:DiffusionCoefficient}.
\end{proposition}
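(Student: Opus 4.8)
The plan is to run a Hilbert (Chapman--Enskog) expansion of $g_\eta$ in powers of $\eta^{-\delta}$ and then to control the remainder with the same energy method used in Proposition~\ref{prop:ConvergenceToAverage}. Writing the equation \eqref{eq:RescaledLinearLandauCauchy} as $\partial_t g_\eta + \eta^\delta \mathcal{A} g_\eta = \eta^{2\delta}\mathcal{L} g_\eta$, with $\mathcal{A} := v\cdot\nabla_x - \nabla_x\Phi\cdot\nabla_v$ the transport operator and $P := \langle\,\cdot\,\rangle$ the orthogonal projection onto $\ker\mathcal{L}$ (the functions on $S^{d-1}_{\abs{v}}$ that do not depend on the direction of $v$), I would insert the ansatz $g_\eta = g_0 + \eta^{-\delta}g_1 + \eta^{-2\delta}g_2 + \dots$ and collect equal powers of $\eta^\delta$.

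At order $\eta^{2\delta}$ this gives $\mathcal{L}g_0 = 0$, so $g_0 = \varrho(x,t) \in \ker\mathcal{L}$. At order $\eta^{\delta}$ it gives $\mathcal{L}g_1 = \mathcal{A}g_0 = v\cdot\nabla_x\varrho$; since $v$ is mean-zero on the sphere, the right-hand side lies in $\operatorname{ran}\mathcal{L} = (\ker\mathcal{L})^\perp$, and the Fredholm alternative --- the invertibility of $-\mathcal{L} = -B\Delta_{\abs{v}}$ on $(\ker\mathcal{L})^\perp$ guaranteed by the spectral gap $\lambda > 0$ of Proposition~\ref{prop:ConvergenceToAverage} --- yields $g_1 = (\mathcal{L}^{-1}v)\cdot\nabla_x\varrho$. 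At order $\eta^{0}$ one finds $\partial_t g_0 + \mathcal{A}g_1 = \mathcal{L}g_2$, whose solvability forces the kernel projection of the left-hand side to vanish: $P(\partial_t\varrho + \mathcal{A}g_1) = 0$. Using that the force part of $\mathcal{A}g_1$ integrates to zero exactly as in \eqref{eq:DivTheoremLandauOperator}, and that rotational symmetry collapses the tensor $\langle v\otimes\mathcal{L}^{-1}v\rangle$ to a scalar multiple of the identity, this solvability condition is precisely the heat equation $\partial_t\varrho - D\Delta\varrho = 0$ with $D = -K\int_{S^{d-1}_{\abs{v}}} v\cdot\mathcal{L}^{-1}v\,\dd v$, up to the normalisation in \eqref{eq:DiffusionCoefficient}. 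The equivalent probabilistic form of $D$ then follows from the resolvent identity $-\mathcal{L}^{-1}v = \int_0^\infty e^{s\mathcal{L}}v\,\dd s$ (valid because $e^{s\mathcal{L}}$ contracts on $(\ker\mathcal{L})^\perp$) together with $e^{s\mathcal{L}}v(v_0) = \mathbb{E}[v\cdot V_s(v_0)]$.

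For the rigorous statement I would fix the approximate profile $g_\eta^{\mathrm{app}} := \varrho + \eta^{-\delta}g_1 + \eta^{-2\delta}g_2$ with $g_2 := \mathcal{L}^{-1}P^\perp(\partial_t\varrho + \mathcal{A}g_1)$ and $\varrho$ the solution of \eqref{eq:HeatEquation}. By construction the operator $\partial_t + \eta^\delta\mathcal{A} - \eta^{2\delta}\mathcal{L}$ annihilates $g_\eta^{\mathrm{app}}$ up to the source $S_\eta := \eta^{-\delta}(\partial_t g_1 + \mathcal{A}g_2) + \eta^{-2\delta}\partial_t g_2$, which is $O(\eta^{-\delta})$ in $L^2$ provided $\varrho$ carries enough space-time regularity (several $x$-derivatives of $\varrho$, supplied by parabolic smoothing for $t > 0$ together with $f_0\in C^2$ and $\Phi\in W^{2,\infty}$). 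The remainder $\mathcal{R}_\eta := g_\eta - g_\eta^{\mathrm{app}}$ then solves the same transport equation with source $-S_\eta$ and data $\mathcal{R}_\eta(0) = (f_0 - \langle f_0\rangle) - \eta^{-\delta}g_1(0) - \eta^{-2\delta}g_2(0)$, and repeating the estimate of Proposition~\ref{prop:ConvergenceToAverage} gives $\frac{1}{2}\frac{\dd}{\dd t}\norm{\mathcal{R}_\eta}_{L^2}^2 \le -\lambda\eta^{2\delta}\norm{P^\perp\mathcal{R}_\eta}_{L^2}^2 + \norm{\mathcal{R}_\eta}_{L^2}\norm{S_\eta}_{L^2}$.

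The hard part will be that the spectral gap damps only the microscopic component $P^\perp\mathcal{R}_\eta$, whereas the conserved mode $P\mathcal{R}_\eta = \langle\mathcal{R}_\eta\rangle$ is invisible to $-\mathcal{L}$ and must be closed separately --- this is the genuine content of a diffusive limit. I expect to handle it by projecting the remainder equation with $P$ and $P^\perp$: the continuity equation $\partial_t\langle\mathcal{R}_\eta\rangle + \eta^\delta\nabla_x\cdot\langle v\,P^\perp\mathcal{R}_\eta\rangle = P S_\eta$ feeds the flux of the already-small fluctuation (controlled by the gap with a gain $\eta^{2\delta}$ against the transport loss $\eta^\delta$) back into an $O(\eta^{-\delta})$ bound on $\langle\mathcal{R}_\eta\rangle$ --- exactly the mechanism that produced the heat equation formally; alternatively one may invoke Proposition~\ref{prop:ConvergenceToAverage} directly to replace $g_\eta$ by $\langle g_\eta\rangle$ and reduce the whole claim to the convergence $\langle g_\eta\rangle \to \varrho$. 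A final subtlety is the initial time layer: at $t = 0$ the datum $f_0 - \langle f_0\rangle$ is only $O(1)$, but it is mean-zero, so the factor $e^{-\lambda\eta^{2\delta}t}$ suppresses it for every fixed $t > 0$; to obtain convergence uniformly down to $t = 0$ one adds the standard layer corrector $e^{\eta^{2\delta}t\,\mathcal{L}}(f_0 - \langle f_0\rangle)$, which decays on the fast scale $\tau = \eta^{2\delta}t$ and reconciles the ansatz with $g_\eta(0) = f_0$, consistently with the rapid relaxation already established in Proposition~\ref{prop:ConvergenceToAverage}.
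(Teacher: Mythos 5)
Your proposal is correct and follows the same route as the paper: a truncated Hilbert expansion in powers of $\eta^{-\delta}$, the order-by-order equations $\mathcal{L}g^{(0)}=0$, $\mathcal{L}g^{(1)}=v\cdot\nabla_x g^{(0)}$, the solvability condition at order $\eta^{0}$ producing the heat equation with $D_{ij}=-K\int v_i\mathcal{L}^{-1}v_j\,\dd v$ collapsed to $D\delta_{ij}$ by isotropy, the semigroup identity $-\mathcal{L}^{-1}=\int_0^\infty e^{s\mathcal{L}}\dd s$ on $(\ker\mathcal{L})^\perp$ for the Green--Kubo form \eqref{eq:DiffusionCoefficient}, and a Gr\"onwall estimate on the remainder. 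Two of your refinements deserve comment. First, the ``hard part'' you anticipate --- separately closing the conserved mode $P\mathcal{R}_\eta$, which the spectral gap does not damp --- is a false alarm: since the heat equation has already been built into the ansatz, the source $S_\eta$ is $O(\eta^{-\delta})$, the transport operator is skew-adjoint, and $(\mathcal{R}_\eta,\mathcal{L}\mathcal{R}_\eta)\leq 0$, so your own displayed inequality closes directly by dropping the negative term, giving $\norm{\mathcal{R}_\eta(t)}_{L^2}\leq\norm{\mathcal{R}_\eta(0)}_{L^2}+CT\eta^{-\delta}$ with no hypocoercive projection argument needed. The paper does the equivalent bookkeeping by weighting the remainder as $\eta^{-\delta}R_\eta$ inside the ansatz \eqref{eq: Truncated Hilbert} and proving only the uniform bound $\norm{R_\eta(t)}_{L^2}\leq CT$; note that its final estimate likewise uses only the non-positivity of $\mathcal{L}$, not the gap (indeed the gap inequality as invoked there would require the remainder to be mean-zero on the sphere, which is not guaranteed --- but the term is discarded anyway). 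Second, your initial-layer corrector $e^{\eta^{2\delta}t\mathcal{L}}(f_0-\langle f_0\rangle)$ is extra care the paper does not take: the paper simply prescribes $g^{(0)}(0)=g_\eta(0)$ (loosely, since $g^{(0)}$ is velocity-independent while $f_0$ need not be) and relies on the fast relaxation of Proposition \ref{prop:ConvergenceToAverage} on the time scale $t_\eta$; your corrector, whose commutator with the transport contributes only $O(\eta^\delta e^{-\lambda\eta^{2\delta}t})$, i.e.\ $O(\eta^{-\delta})$ after time integration, is the cleaner way to reconcile $g_\eta(0)=f_0$ with $\varrho(0)=\langle f_0\rangle$ and to make the $L^\infty$-in-time statement honest near $t=0$.
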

\begin{proof}
    We start by considering the truncated Hilbert expansion of $g_\eta$, which is given through
    \begin{equation} \label{eq: Truncated Hilbert}
        g_\eta(x,v,t) = g^{(0)} (x,v,t) + \frac{1}{\eta^\delta} g^{(1)}(x,v,t) + \frac{1}{\eta^{2\delta}} g^{(2)}(x,v,t) + \frac{1}{\eta^\delta}R_\eta.
    \end{equation}
    $R_\eta$ denotes the reminder of the series.
    We insert the expansion \eqref{eq: Truncated Hilbert} into \eqref{eq:RescaledLinearLandauCauchy} and obtain
    \begin{equation} \label{eq:HilbertInPDE}
        \begin{aligned}
            &(\partial_t + \eta^\delta v \cdot \nabla_x - \eta^\delta \nabla_x \Phi \cdot \nabla_v)g_\eta\\
            &= (\partial_t + \eta^\delta v \cdot \nabla_x - \eta^\delta \nabla_x \Phi \cdot \nabla_v)\left(g^{(0)} + \frac{1}{\eta^\delta} g^{(1)}+ \frac{1}{\eta^{2\delta}} g^{(2)} + \frac{1}{\eta^\delta}R_\eta\right)\\
            &= \eta^{2\delta} \mathcal{L} g_\eta = \eta^{2\delta} \mathcal{L} \left(g^{(0)} + \frac{1}{\eta^\delta} g^{(1)}+ \frac{1}{\eta^{2\delta}} g^{(2)} + \frac{1}{\eta^\delta}R_\eta\right).
        \end{aligned}
    \end{equation}
    If we organise \eqref{eq:HilbertInPDE} with respect to the powers of $\eta$ and define $A_\eta(t):= \partial_t g^{(1)} + \frac{1}{\eta^\delta}\partial_t g^{(2)}+v\cdot\nabla_x g^{(2)} - \nabla_x \Phi \cdot \nabla_v g^{(2)}$, we arrive at the following set of equations
    \begin{enumerate}[label={(\roman*)}]
        \item $\mathcal{L} g^{(0)} = 0$,
        \item \label{item:two} $v \cdot \nabla_x g^{(0)} -\nabla_x \Phi \cdot \nabla_v g^{(0)} = \mathcal{L} g^{(1)}$,
        \item $\partial_t g^{(0)}+ v \cdot \nabla_x g^{(1)}-\nabla_x \Phi \cdot \nabla_v g^{(1)} = \mathcal{L}g^{(2)}$ as well as
        \item $(\partial_t + \eta^\delta v \cdot \nabla_x - \eta^\delta \nabla_x \Phi \cdot \nabla_v) R_\eta = \eta^{2\delta} \mathcal{L}R_\eta - A_\eta(t)$.
    \end{enumerate}
    Clearly, the functions satisfying $\mathcal{L}g^{(0)} = 0$ must belong to the null space of the Laplace-Beltrami operator. As a consequence of the periodicity of the sphere, we conclude that the only functions, which are \textit{harmonic} with respect to $\mathcal{L}$, are constants in velocity space.
    
    Let us now consider the second equation from above. Item \ref{item:two} has only a solution if the left hand side belongs to 
    \begin{equation}
    	(\mathrm{ker} (\mathcal{L}))^{\perp} := \left\{h \in L^{2}(S_{\abs{v}}^{d-1} ): \int_{S^{d-1}_{\abs{v}}} h(v) \dd v = 0 \right\}.
    \end{equation}
        Indeed, the second equation above yields
    \begin{equation}
        \int_{S^{d-1}_\abs{v}}  \left( v \cdot \nabla_x -\nabla_x \Phi \cdot \nabla_v \right) g^{(0)}  \dd v =  \int_{S^{d-1}_\abs{v}} v \cdot \nabla_x  g^{(0)}  \dd v =0,
    \end{equation}
    since $g^{(0)} $ does not depend on the velocity and therefore $ v \cdot \nabla_x  g^{(0)} $ is an odd function in $v$. At this point we introduce the pseudo inverse $\mathcal{L}^{-1}$ and deduce
    \begin{equation} \label{eq:Identity g1}
        g^{(1)} = \mathcal{L}^{-1} v \cdot \nabla_x g^{(0)} . 
    \end{equation}
    For the third equation of the list, we find that the right hand side gives
    \begin{equation} \label{eq: Integral v Lg2}
        \int_{S^{d-1}_\abs{v}}\mathcal{L}g^{(2)} \dd v = \int_{\partial S^{d-1}_\abs{v}} \nabla_\abs{v} g^{(2)} \cdot \hat{n} \dd \sigma  = 0
    \end{equation}
    by the same argument as in \eqref{eq:DivTheoremLandauOperator}. Integrating the left hand side with respect to $v$ and using the result from above yields
    \begin{equation} \label{eq:Heat g0}
        \partial_t g^{(0)} + K\int_{S^{d-1}_\abs{v}}v \cdot \nabla_x g^{(1)} \dd v = 0
    \end{equation}
    after applying the divergence theorem to $\nabla_x \Phi \cdot \nabla_v g^{(1)}$.
    We continue by inserting the identity \eqref{eq:Identity g1} into \eqref{eq:Heat g0}:
    \begin{equation}
        \begin{aligned}
            &\partial_t g^{(0)} + K\int_{S^{d-1}_\abs{v}} v \cdot \nabla_x \left \{ \mathcal{L}^{-1}  v \cdot \nabla_x g^{(0)} \right \} \dd v\\
            &= \partial_t g^{(0)} + K\int_{S^{d-1}_\abs{v}} v \cdot  \mathcal{L}^{-1} v \Delta_x g^{(0)}  \dd v =0.
        \end{aligned}
    \end{equation}
    Let us introduce the diffusion coefficient
    \begin{equation} \label{eq:DiffusionIndex-Form}
        D_{ij}:= - K\int_{S^{d-1}_\abs{v}}  v_i \mathcal{L}^{-1} v_j \dd v.
    \end{equation}
    As a direct consequence of symmetry, we note $D_{ij} = D \delta_{ij}$. 
    Furthermore, we see that $D>0$ by the negativity of $\mathcal{L}^{-1}$ and we can obtain the desired Green-Kubo relation from \eqref{eq:DiffusionIndex-Form}:
    \begin{equation}
        \begin{aligned}
            D= -K \int_{S^{d-1}_\abs{v}}  v \cdot \mathcal{L}^{-1} v \dd v &= K\int_{S^{d-1}_\abs{v}} \int_0^\infty v \cdot e^{\mathcal{L} t} v \dd t \dd v\\
            & =K\int_{S^{d-1}_\abs{v}} \int_0^\infty \mathbb{E} [v \cdot V_t(v)] \dd t \dd v\\
            & =  \int_0^\infty \mathbb{E} [v \cdot V_t(v)] \dd t.
        \end{aligned}
    \end{equation}
    Going back to equation \eqref{eq:Heat g0}, we use the definition of $D$ to find
    \begin{equation}
        \partial_t g^{(0)} - D \Delta_x  g^{(0)}  = 0.
    \end{equation}
    From now on we will assume that $g^{(0)}$ satisfies the following initial condition $g_\eta(x,v,t=0) = g^{(0)}(x,v,0)$. It is obvious that $g^{(0)}$ stays in $L^2$, since it satisfies the heat equation. Furthermore, by equation \eqref{eq:Identity g1}, the regularity of $g^{(0)}$ also implies $g^{(1)} \in L^2$.

    In \eqref{eq: Integral v Lg2} we showed that $\int_{S^{d-1}_\abs{v}} \mathcal{L} g^{(2)} \dd v =0$, which allows us to invert $\mathcal{L}$ and find an expression for $g^{(2)}$:
    \begin{equation}
        \begin{aligned}
            g^{(2)}& = \mathcal{L}^{-1} \left(\partial_t g^{(0)}+ v \cdot \nabla_x g^{(1)}-\nabla_x \Phi \cdot \nabla_v g^{(1)} \right)\\
            &= \mathcal{L}^{-1} \left(D \Delta_x  g^{(0)} + v \cdot \mathcal{L}^{-1} v \Delta_x g^{(0)}-\nabla_x \Phi \cdot \nabla_v \left \{\mathcal{L}^{-1} v \cdot \nabla_x g^{(0)} \right \}\right).\\
        \end{aligned}
    \end{equation}
    This equality shows that the $L^2$-norm of $g^{(2)}$ is bounded as well, since we can express each term in the equation above by (suitable derivatives of) $g^{(0)}$.

    The last step consists of controlling the $L^2$-norm of $R_\eta$. We therefore study
    \begin{equation}
        \begin{aligned}
            \frac{1}{2} \frac{\dd}{\dd t} \norm{R_\eta(t)}^2_{L^2} &= - \eta^{2\delta} (R_\eta, - \mathcal{L}R_\eta) - (R_\eta, A_\eta(t))\\
            &\leq - \lambda \eta^{2\delta} \norm{R_\eta}^2_{L^2} + \norm{R_\eta}_{L^2} \norm{A_\eta (t)}_{L^2}
        \end{aligned}
    \end{equation}
    by the same argument as in \eqref{eq:ODE Reta}, which implies
    \begin{equation}
        \frac{\dd}{\dd t} \norm{R_\eta(t)}_{L^2} \leq  \norm{A_\eta (t)}_{L^2}.
    \end{equation}
    We need to find a suitable estimate for $A_\eta$ in $L^2$. As an intermediate result, let us consider
    \begin{equation}
        \begin{aligned}
            \partial_t g^{(1)} &= \partial_t \left(\mathcal{L}^{-1}  v \cdot \nabla_x g^{(0)}\right)\\
            &= \mathcal{L}^{-1} \left( \Dot{v} \cdot \nabla_x g^{(0)} + v \cdot \nabla_x \partial_t g^{(0)}\right)\\
            &=\mathcal{L}^{-1} \left(\Dot{v} \cdot \nabla_x g^{(0)} + v \cdot \nabla_x \left \{D \Delta_x g^{(0)} \right \}\right).
        \end{aligned}
    \end{equation}
    The last line tells us that $\partial_t g^{(1)}$ is indeed bounded in $L^2$. Similarly, we can express $\partial_t g^{(2)}$, $v \cdot \nabla_x g^{(2)}$ as well as $-\nabla_x \Phi \cdot \nabla_v g^{(2)}$ in terms of derivatives with respect to $x$ and $v$ of $g^{(0)}$. Hence, $\norm{A_\eta}_{L^2}$ is uniformly bounded for all $t \in [0,T]$:
    \begin{equation}
        \norm{R_\eta(t)}_{L^2} \leq \int_0^T  \norm{A_\eta(s)}_{L^2} \dd s \leq C T.
    \end{equation}
    This estimate of $\norm{R_\eta(t)}_{L^2}$ shows that the Hilbert expansion in \eqref{eq: Truncated Hilbert} converges to $g^{(0)} \equiv \varrho$ as $\eta \to \infty$ and we conclude the proof.
\end{proof}
\begin{remark}
    It is indeed enough to consider the truncated Hilbert expansion in \eqref{eq: Truncated Hilbert}, since the general ansatz 
    \begin{equation}
        g_\eta = \sum_{k=0}^\infty \eta^{-k \delta} g^{(k)}
    \end{equation}
    would yield the following recursion formula
    \begin{equation}
        \partial_t g^{(n)} + v \cdot \nabla_x g^{(n+1)} - \nabla_x \Phi \cdot \nabla_v g^{(n+1)} = \mathcal{L} g^{(n+2)}, \qquad n \geq 1
    \end{equation}
    and the procedure in the proof could be repeated for any $n$.
\end{remark}
We finalise the claim of Theorem \ref{thm:HeatEquation} if we prove
\begin{proposition} \label{prop: Comparision Operators}
    Let us consider $\Tilde{h}_\varepsilon$ given by \eqref{eq: Diffusive scaling Boltzmann} with initial condition $f_0$. Then, for all $t \in [0,T]$
    \begin{equation}
        \Tilde{h}_\varepsilon - \varrho \xrightarrow[]{\varepsilon \to 0} 0 \quad \text{in} \quad L^2(\mathbb{R}^d \times S^{d-1}_\abs{v}).
    \end{equation}
\end{proposition}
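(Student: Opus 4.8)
The plan is to compare $\Tilde{h}_\varepsilon$ with the Landau solution $g_\eta$ of \eqref{eq:RescaledLinearLandauCauchy} and to transfer the convergence through Proposition \ref{prop: Hilbert Expansion}. By the triangle inequality,
\begin{equation}
    \norm{\Tilde{h}_\varepsilon - \varrho}_{L^2} \leq \norm{\Tilde{h}_\varepsilon - g_\eta}_{L^2} + \norm{g_\eta - \varrho}_{L^2},
\end{equation}
and the second summand tends to zero as $\varepsilon \to 0$ (equivalently $\eta \to \infty$) by Proposition \ref{prop: Hilbert Expansion}. The task is therefore reduced to the purely kinetic statement $\norm{\Tilde{h}_\varepsilon - g_\eta}_{L^2} \to 0$, which quantifies the weak-coupling passage from the Boltzmann collision operator $L$ in \eqref{eq: Definition Collision integral} to the Landau operator $\mathcal{L} = B\Delta_{\abs{v}}$.

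First I would introduce $w_\varepsilon := \Tilde{h}_\varepsilon - g_\eta$. Since $\Tilde{h}_\varepsilon$ and $g_\eta$ solve \eqref{eq: Diffusive scaling Boltzmann} and \eqref{eq:RescaledLinearLandauCauchy} with the identical transport and mean-field part and the same datum $f_0$, we have $w_\varepsilon(0) = 0$, and splitting $L\Tilde{h}_\varepsilon - \mathcal{L}g_\eta = Lw_\varepsilon + (L-\mathcal{L})g_\eta$ yields
\begin{equation}
    (\partial_t + \eta^\delta v \cdot \nabla_x - \eta^\delta \nabla_x \Phi \cdot \nabla_v) w_\varepsilon = \eta^{2\delta} L w_\varepsilon + \eta^{2\delta}(L - \mathcal{L}) g_\eta.
\end{equation}
Testing against $w_\varepsilon$ in $L^2$ and repeating the computation leading to \eqref{eq:ODE Reta}, the transport and mean-field terms are skew-symmetric and drop out, whereas $(w_\varepsilon, L w_\varepsilon) \leq 0$ because $L$ is a non-positive jump operator on $S^{d-1}_{\abs{v}}$. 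Discarding this dissipation I obtain
\begin{equation}
    \frac{\dd}{\dd t} \norm{w_\varepsilon(t)}_{L^2} \leq \eta^{2\delta} \norm{(L - \mathcal{L}) g_\eta(t)}_{L^2},
\end{equation}
and integrating from $w_\varepsilon(0) = 0$ gives $\norm{w_\varepsilon(t)}_{L^2} \leq \eta^{2\delta} \int_0^t \norm{(L-\mathcal{L})g_\eta(s)}_{L^2} \dd s$.

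The main obstacle is the estimate of $\norm{(L-\mathcal{L})g_\eta}_{L^2}$, since $\mathcal{L}$ is exactly the grazing-collision approximation of $L$. I would Taylor-expand $g_\eta(v') - g_\eta(v)$ in the deflection $v' - v = -2(\omega \cdot v)\omega$, which is of size $O(\varepsilon^\alpha)$ for the rescaled potential $\varepsilon^\alpha U$: the first-order term integrates to zero by the symmetry of the scattering, the second-order term reproduces $\mathcal{L} = B\Delta_{\abs{v}}$ against the prefactor $\varepsilon^{-2\alpha}$ (this is how $B$ is fixed, cf.\ \cite{desvillettes_rigorous_2001}), and the remainder is controlled by the higher velocity derivatives of $g_\eta$. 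Here the regularity of $g_\eta$ --- two derivatives inherited from $f_0$ together with the transverse smoothing from the Laplace-Beltrami operator --- is essential, and the delicate point is to push the remainder down to the sharp rate $\norm{(L-\mathcal{L})g_\eta}_{L^2} \leq C\varepsilon^{(d-1-8\alpha)/2}$ compatible with the error analysis of the companion work \cite{nowak_lorentz_2024}, rather than the cruder bound a naive expansion would give. With such a rate the estimate becomes $\norm{w_\varepsilon(t)}_{L^2} \leq CT\, \eta^{2\delta} \varepsilon^{(d-1-8\alpha)/2}$, whose square is a constant multiple of $\varepsilon^{d-1-8\alpha}\eta^{4\delta}$; the hypothesis $\varepsilon^{d-1-8\alpha}\eta_\varepsilon^{4\delta} \to 0$ of Theorem \ref{thm:HeatEquation} then forces $\norm{w_\varepsilon}_{L^2} \to 0$ uniformly on $[0,T]$. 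Combined with Proposition \ref{prop: Hilbert Expansion} this yields the claim. It is precisely in reaching this sharp rate that the smoothness of the scattering potential, which permits the strong divergence of $\eta_\varepsilon$, is used decisively.
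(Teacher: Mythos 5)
Your overall strategy coincides with the paper's: the triangle inequality through $g_\eta$ and Proposition \ref{prop: Hilbert Expansion} (which the paper leaves implicit), the energy estimate on $w_\varepsilon = \Tilde{h}_\varepsilon - g_\eta$ with the splitting $L\Tilde{h}_\varepsilon - \mathcal{L}g_\eta = Lw_\varepsilon + (L-\mathcal{L})g_\eta$, discarding the dissipative term $(w_\varepsilon, Lw_\varepsilon)\leq 0$, and the grazing-collision Taylor expansion of $(L-\mathcal{L})g_\eta$. Up to the point where this expansion must be quantified, your argument and the paper's agree step by step.

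The gap is in the rate you assign to that final estimate. The expansion you describe yields, and can only yield, $\norm{(L-\mathcal{L})g_\eta}_{L^2} \leq C\varepsilon^{2\alpha}$: from $\abs{v-v'}^2 = 4\abs{v}^2\sin^2(\theta(\rho)/2)$ and the scattering bound $\theta(\rho)\leq C\varepsilon^\alpha$, the deflection is $O(\varepsilon^\alpha)$, the odd terms cancel, the second-order term reproduces $B\Delta_{\abs{v}}$ to leading order, and the remainder is $\varepsilon^{-2\alpha}\cdot O(\abs{v-v'}^4) = O(\varepsilon^{2\alpha})$. This is exactly the bound the paper proves, giving $\sup_{t\leq T}\norm{w_\varepsilon}_{L^2}\leq C\eta_\varepsilon^{2\delta}\varepsilon^{2\alpha}T$, and it closes the argument under the (implicit) constraint that $\eta_\varepsilon$ diverges slowly enough that $\eta_\varepsilon^{2\delta}\varepsilon^{2\alpha}\to 0$. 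Your target rate $\varepsilon^{(d-1-8\alpha)/2}$ is instead reverse-engineered from the hypothesis $\varepsilon^{d-1-8\alpha}\eta_\varepsilon^{4\delta}\to 0$ of Theorem \ref{thm:HeatEquation}; that hypothesis plays no role in the Boltzmann-to-Landau comparison. In the paper it is used only \emph{after} this proposition, to control the memory (recollision) errors inherited from \cite{nowak_lorentz_2024} when passing from the mechanical density $f_\varepsilon$ to the Boltzmann solution $\Tilde{h}_\varepsilon$ at diffusive times. You have conflated these two distinct error sources. Worse, your rate is not merely unproven but unattainable by this method: the fourth-order term in the expansion is genuinely of size $\varepsilon^{2\alpha}$, so one cannot have $\norm{(L-\mathcal{L})g_\eta}_{L^2}\lesssim \varepsilon^{(d-1-8\alpha)/2}$ whenever $(d-1-8\alpha)/2 > 2\alpha$, i.e.\ for $\alpha < (d-1)/12$. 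The "delicate point" you defer is therefore a dead end; the fix is simply to keep the $\varepsilon^{2\alpha}$ bound your own expansion produces and to recognise that the smallness condition relevant here is on $\eta_\varepsilon^{2\delta}\varepsilon^{2\alpha}$, not on $\varepsilon^{d-1-8\alpha}\eta_\varepsilon^{4\delta}$.
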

\begin{proof}
    Let $g_{\eta_\varepsilon}$ satisfy \eqref{eq:RescaledLinearLandauCauchy}, whereas $\eta \equiv \eta_\varepsilon$. Then, if we study
    \begin{equation} \label{eq: Difference Equations Boltzmann Landau}
        (\partial_t + \eta_\varepsilon^\delta v \cdot \nabla_x - \eta_\varepsilon^\delta \nabla_x \Phi \cdot \nabla_v)(\Tilde{h}_\varepsilon-g_{\eta_\varepsilon}) = \eta_\varepsilon^{2\delta} ({L} \Tilde{h}_\varepsilon - \mathcal{L}g_{\eta_\varepsilon}),
    \end{equation}
    we obtain
    \begin{equation}
        \begin{aligned}
            \frac{1}{2} \frac{\dd}{\dd t} \norm{\Tilde{h}_\varepsilon-g_{\eta_\varepsilon}}^2_{L^2} &= \eta_\varepsilon^{2\delta} (\Tilde{h}_\varepsilon-g_{\eta_\varepsilon},{L} \Tilde{h}_\varepsilon - \mathcal{L}g_{\eta_\varepsilon})\\
            &= -\eta_\varepsilon^{2\delta} (\Tilde{h}_\varepsilon-g_{\eta_\varepsilon},-{L} [\Tilde{h}_\varepsilon-g_{\eta_\varepsilon}]) + \eta_\varepsilon^{2\delta} (\Tilde{h}_\varepsilon-g_{\eta_\varepsilon}, [{L} - \mathcal{L}]g_{\eta_\varepsilon})\\
            &\leq -\eta_\varepsilon^{2\delta} \lambda_L \norm{\Tilde{h}_\varepsilon-g_{\eta_\varepsilon}}^2_{L^2}+\eta_\varepsilon^{2\delta} \norm{\Tilde{h}_\varepsilon-g_{\eta_\varepsilon}}_{L^2} \norm{[{L} - \mathcal{L}]g_{\eta_\varepsilon}}_{L^2}\\
            &\leq \eta_\varepsilon^{2\delta} \norm{\Tilde{h}_\varepsilon-g_{\eta_\varepsilon}}_{L^2} \norm{[{L} - \mathcal{L}]g_{\eta_\varepsilon}}_{L^2},
        \end{aligned}
    \end{equation}
    where we used the positivity of $-{L}$ with eigenvalue $\lambda_L >0$. The last inequality arises by noting that the first term in the third line is always smaller than zero and will lead to a decay of $\norm{\Tilde{h}_\varepsilon-g_{\eta_\varepsilon}}_{L^2}$. We obtain the final ODE
    \begin{equation}
        \frac{\dd}{\dd t} \norm{\Tilde{h}_\varepsilon-g_{\eta_\varepsilon}}_{L^2} \leq \eta_\varepsilon^{2\delta}  \norm{[{L} - \mathcal{L}]g_{\eta_\varepsilon}}_{L^2}.
    \end{equation}
    Let us recall at this point the definition of the linear collision operator:
    \begin{equation} \label{eq:RescaledBoltzmannOperator}
        {L} g_{\eta_\varepsilon} =  \abs{v} \varepsilon^{-2\alpha} \int_{-1}^1  \left \{ g_{\eta_\varepsilon}(x,v',t) - g_{\eta_\varepsilon}(x,v,t) \right\}\dd \rho.
    \end{equation}
    In the limit of grazing collisions, we can expand the integrand in \eqref{eq:RescaledBoltzmannOperator} for small changes in velocity. More specifically, this expansion results in
    \begin{equation} \label{eq:ExpansionIntegran}
        \begin{aligned}
            g_{\eta_\varepsilon}(v') - g_{\eta_\varepsilon}(v) = &(v'- v) \cdot \nabla_{\abs{v}} g_{\eta_\varepsilon}(v)\\
            &+ \frac{1}{2} (v'-v) \otimes (v'-v) : \nabla_\abs{v} \nabla_\abs{v}g_{\eta_\varepsilon}(v) + \Tilde{R}_{\eta_\varepsilon},
        \end{aligned}
    \end{equation}
    where $\Tilde{R}_{\eta_\varepsilon} = O(\abs{v-v'}^3)$. Inserting \eqref{eq:ExpansionIntegran} into \eqref{eq:RescaledBoltzmannOperator} gives
    \begin{equation}
        \begin{aligned}
            {L} g_{\eta_\varepsilon} =&  \abs{v}  {\varepsilon^{-2\alpha}}\int_{-1}^1   \Big\{ (v-v') \cdot \nabla_\abs{v} g_{\eta_\varepsilon}(v) \\
            &+ \frac{1}{2} (v'-v) \otimes (v'-v) : \nabla_\abs{v} \nabla_\abs{v}g_{\eta_\varepsilon}(v) + \Tilde{R}_{\eta_\varepsilon} \Big\}\dd \rho\\
            &= \abs{v}  {\varepsilon^{-2\alpha}}\int_{-1}^1   \left\{ \frac{1}{2} \Delta_\abs{v} g_{\eta_\varepsilon} \abs{v-v'}^2 + R_{\eta_\varepsilon} \right\}\dd \rho,
        \end{aligned}
    \end{equation}
    where we noted that the odd terms in $(v-v')$ vanish due to symmetry and introduced $R_{\eta_\varepsilon} = O(\abs{v-v'}^4 )$. 
    
    Finally, by using the identity $\abs{v-v'}^2 = 4 \abs{v}^2 \sin^2{\left(\frac{\theta(\rho)}{2} \right)}$, we obtain
    \begin{equation}
        \begin{aligned}
            \abs{v} {\varepsilon^{-2\alpha}}\int_{-1}^1  \abs{v'-v}^2 \dd \rho  &=  \abs{v} {\varepsilon^{-2\alpha}} \int_{-1}^1  4 \abs{v}^2 \sin^2{\left(\frac{\theta(\rho)}{2} \right)} \dd \rho.
        \end{aligned}
    \end{equation}
    If we apply basic scattering theory to the function modelling the interaction potential, we find that $\theta(\rho) \leq C \varepsilon^\alpha$ (see also \cite{nowak_lorentz_2024}). 
    Then, the expression from above coincides by the leading order with $B$ in equation \eqref{eq:Limiting Linear Landau}. 
    We can match $\mathcal{L}$ with ${L}$ and get
    \begin{equation}
        \norm{[{L} - \mathcal{L}]g_{\eta_\varepsilon}}_{L^2} \leq C \varepsilon^{2 \alpha}  \norm{\Delta_\abs{v}^2 g_{\eta_\varepsilon}} \leq C \varepsilon^{2 \alpha} ,
    \end{equation}
    which implies the following uniform bound in time
    \begin{equation} \label{eq:MatchBoltzmannLandau}
        \sup_{t \in [0,T]}\norm{\Tilde{h}_\varepsilon-g_{\eta_\varepsilon}}_{L^2} \leq C \eta_\varepsilon^{2\delta} \varepsilon^{2 \alpha}  T.
    \end{equation}
    Equation \eqref{eq:MatchBoltzmannLandau} tells us that for all $t \in [0,T]$
    \begin{equation}
        \Tilde{h}_\varepsilon-g_{\eta_\varepsilon} \xrightarrow[]{\varepsilon \to 0} 0 \quad \text{in} \quad L^2(\mathbb{R}^d \times S^{d-1}_{\abs{v}}).
    \end{equation}
\end{proof}
\begin{remark}
    In the proof of Proposition \ref{prop: Comparision Operators}, it is evident that we need higher densities. If the density remained the same, the right hand side of \eqref{eq: Difference Equations Boltzmann Landau} would be 
    \begin{equation}
        \eta_\varepsilon^{2 \delta}\left(\frac{L}{\eta_\varepsilon^\delta}\Tilde{h}_\varepsilon - \mathcal{L}g_{\eta_\varepsilon} \right)
    \end{equation}
    and therefore could not coincide with $\mathcal{L}$ in \eqref{eq:RescaledLinearLandauCauchy}.
\end{remark}
We conclude Theorem \ref{thm:HeatEquation} by commenting on the limiting behaviour of the functions introduced above. More precisely, if $\mu_\varepsilon= \varepsilon^{-d+1-2\alpha}\eta_\varepsilon^\delta$, we want to argue that $f_\varepsilon (\eta_\varepsilon^{-\delta} t)$ converges to $\Tilde{h}_\varepsilon(t)$ in $L^2(\mathbb{R}^d \times S^{d-1}_\abs{v})$ for all $t \in [0,T]$. Although the authors in \cite{nowak_lorentz_2024} proved only
\begin{equation} \label{eq:ConvergenceResultChiaraDomi}
    {f}_\varepsilon (t) - {h}_\varepsilon(t) \xrightarrow[]{\varepsilon \to 0} 0 \quad \text{in} \quad L^1(\mathbb{R}^d \times S^{d-1}_\abs{v})
\end{equation}
for any $t \in [0,T]$, the result in \eqref{eq:ConvergenceResultChiaraDomi} still holds for time scales $t \to \eta_\varepsilon^{-\delta} t$ and $\mu_\varepsilon= \varepsilon^{-d+1-2\alpha}\eta_\varepsilon^\delta$ as long as the \textit{critical} error 
\begin{equation}
    \varepsilon^{d-1-8 \alpha} \eta^{4 \delta}_\varepsilon \xrightarrow[]{\varepsilon \to 0} 0
\end{equation}
for a suitable function $\eta_\varepsilon$.
Furthermore, since $f_0$ has compact support under the assumption of Theorem \ref{thm:HeatEquation}, we conclude that ${h}_\varepsilon(\eta_\varepsilon^{-\delta} t)$ converges to $\Tilde{h}_\varepsilon(t)$ for every $t \in [0,T]$ in $L^1(\mathbb{R}^d \times S^{d-1}_\abs{v})$.
This in turn guarantees the asymptotic closeness of ${f}_\varepsilon (\eta_\varepsilon^{-\delta} t)$ and $\Tilde{h}_\varepsilon (t)$.
Again --- by hypothesis --- the functions ${f}_\varepsilon (\eta_\varepsilon^{-\delta} t)$ as well as $\Tilde{h}_\varepsilon (t)$ are uniformly bounded. Hence, the convergence in $L^2$ follows from the result in \eqref{eq:ConvergenceResultChiaraDomi}.
We finalise the remark by mentioning that the Boltzmann operator conserves the total mass and therefore
\begin{equation}
    f_\varepsilon(\eta_\varepsilon^{-\delta} t) - \Tilde{h}_\varepsilon(t) \xrightarrow[]{\varepsilon \to 0} 0 \quad \text{in} \quad L^\infty ([0,T];L^2(\mathbb{R}^d \times S^{d-1}_\abs{v})).
\end{equation}
\\[1cm]
\small{\textit{Acknowledgements.} The author is funded by the Swiss National Science Foundation through the NCCR SwissMAP and the SNSF Eccellenza project PCEFP2\textunderscore181153 as well as acknowledges the support of the Swiss State Secretariat for Research and Innovation by the project P.530.1016 (AEQUA). Furthermore, the author would like to thank Chiara Saffirio for insightful discussions.}

\printbibliography

\end{document}